\documentclass[conference]{IEEEtran}
\usepackage{cite}
\usepackage{amsmath,amssymb,amsfonts}
\usepackage{algorithmic}

\usepackage{graphicx}
\usepackage{textcomp}
\usepackage{xcolor}
\usepackage{array}
\usepackage{amsthm}
\usepackage{caption}
\usepackage{hyperref}
\newtheorem{theorem}{Theorem}
\newtheorem{lemma}{Lemma}
\newtheorem{corollary}{Corollary}
\usepackage{subcaption}

\ifCLASSINFOpdf
\else
\fi
\begin{document}
%
\title{Integrated Wake-Up Radio and MIMO Solution for Cellular IoT Networks}

\author{
    \IEEEauthorblockN{Israa Khaled\IEEEauthorrefmark{1}, Ammar El Falou\IEEEauthorrefmark{2}, Nour Kouzayha\IEEEauthorrefmark{2}, Charlotte Langlais\IEEEauthorrefmark{3}}
    \IEEEauthorblockA{\IEEEauthorrefmark{1}Laboratoire de Traitement et Communication de l’Information (LTCI), Télécom Paris, Institut Polytechnique de Paris, France}
    \IEEEauthorblockA{\IEEEauthorrefmark{2} CEMSE Division, King Abdullah University of Science and Technology (KAUST), Saudi Arabia}
    \IEEEauthorblockA{\IEEEauthorrefmark{3} Mathematical and Electrical Department, CNRS UMR 6285 Lab-STICC, IMT Atlantique, Brest, France}
}


%


\maketitle

\begin{abstract}

    Wake-up radio (WUR) is a technology designed to enhance the energy efficiency of Internet of Things (IoT) networks and extend device battery life. While most studies focus on WUR performance with single-antenna base stations, this paper investigates the multiple-input multiple-output (MIMO) technology to improve device energy saving and extend the coverage of wake-up signals. By leveraging MIMO beamforming, the transmitted energy can be spatially focused toward the intended IoT devices, with high beamforming gain and minimal inter-device interference. We develop a preliminary analytical framework using stochastic geometry to evaluate the wake-up success probability of WUR-MIMO in multi-cell cellular IoT networks, when the number of antennas equals $2 \times (\text{number of devices}) - 1$. Monte Carlo simulations show that, relative to a single-antenna WUR baseline, MIMO beamforming significantly enhances wake-up reliability when this antenna configuration is applied, mitigates more than 50\% of false activations across all settings, and thereby prolongs the lifetime of IoT devices.
    
\end{abstract}


%
\IEEEpeerreviewmaketitle

\begin{IEEEkeywords}
Wake-up radio, multiple-input multiple-output, beamforming, Internet of Things, stochastic geometry.
\end{IEEEkeywords}
\section{Introduction}
The Internet of Things (IoT) will play a central role in the sixth generation (6G) of wireless networks by enabling ultra-reliable, low-latency communication and massive connectivity, supporting applications such as smart cities, autonomous transport, smart agriculture, and healthcare. However, large-scale IoT deployment faces significant energy constraints, as most devices rely on limited-capacity batteries that are difficult to replace or recharge, especially in remote areas~\cite{piyare2017ultra}.

Device energy-saving techniques for large-scale IoT networks have gained significant attention. \textit{Energy harvesting} enables devices to capture ambient energy (e.g., solar, wind, radio-frequency (RF) signals), reducing battery dependence but suffering from variability in environmental conditions~\cite{ng2012energy}. \textit{Resource allocation} strategies minimize power consumption by optimizing communication and computation, though they often incur high computational and signaling overhead in large-scale systems~\cite{zhang2015energy}. \textit{Discontinuous reception (DRX)} reduces energy use by periodically turning off receivers, but misalignment with traffic can increase latency and delay urgent transmissions~\cite{jha2014device}.

Another promising approach is the \textit{wake-up radio (WUR)}, standardized in IEEE 802.11ba and included in 3GPP Release~18~\cite{TR18-3GPP,hoglund20243gpp}. WURs provide a hardware solution with listening power consumption orders of magnitude lower than that of low-power radios, addressing idle listening, overhearing, continuous transmissions, and data latency. Devices remain in deep sleep until an external RF signal exceeds a threshold; they then perform their tasks and return to sleep, minimizing energy drain and extending battery life~\cite{mercier_ultra-low_2015,mercier2022low,shellhammer2023ieee}. According to \cite{hoglund20243gpp}, significant device power savings can be achieved with the WUR technique compared to existing DRX/extended DRX.

Most research on WUR has centered on single-antenna transmitters which emit power omni-directionally without considering the specific locations of IoT devices~\cite{kouzayha_joint_2018}. This single-antenna WUR approach, however, suffers from false alarms, as the IoT devices may be erroneously activated by unintended signals~\cite{Khalifa2024Energy}. Moreover, the WUR range is constrained by the limited sensitivity of the added wake-up receiver hardware~\cite{benbuk_leveraging_2020,mercier2022low}. Enhancing sensitivity generally increases power consumption as it requires active circuit components at the device's side. Alternatively, digital baseband processing offers a pathway to balance sensitivity and energy efficiency effectively. For instance, a minimum energy channel coding scheme improves sensitivity while maintaining low energy consumption~\cite{djidi_mees-WUR_2021}.

To extend the coverage of the wake-up signal (WUS), we leverage the spatial diversity of multiple-input multiple-output (MIMO) technology by equipping the base station (BS) with multiple antennas~\cite{falou_precoded_2022}. 
Assuming full channel state information (CSI) knowledge, we employ MIMO beamforming to generate a precoded WUS that activates the targeted IoT devices~\cite{falou_precoded_2022}. We analyze how different MIMO beamforming strategies influence the successful wake-up of IoT devices within a geometric channel model in a multi-cell scenario. Additionally, in \cite{tannous2025mimo}, we evaluate MIMO beamforming performance in a single-cell IoT network under Rayleigh fading, considering both successful and false wake-up probabilities. The promising results from integrating WUR and MIMO technologies—denoted here as WUR-MIMO—motivate this work to leverage stochastic geometry, which provides a realistic and tractable framework for modeling MIMO systems in large-scale wireless networks~\cite{andrews2011tractable}. Stochastic geometry has also been applied to evaluate single-antenna WUR solutions~\cite{kouzayha_joint_2018} and RF energy harvesting~\cite{sakr2015analysis}. We focus on two main metrics, namely (i) \textit{the success wake-up probability}, defined as the probability of waking up the IoT device of interest, and (ii) \textit{the false wake-up probability}, characterizing the probability of false alarms. 


\textit{Notation:} 
\textbf{A}, \textbf{a} and  $a$ denote matrix, vector and scalar, respectively.  $ (\cdot) ^H $  represents the Hermitian transpose. $\Gamma(\alpha, \beta)$  is a gamma random variable with shape parameter $\alpha$ and scale parameter $\beta$. The Euclidean norm is denoted by $\lVert . \rVert$, and the complementary incomplete Beta function is given by $B'(q,l,z)=\int_z^1t^{q-1}(1-t)^{l-1}dt$.

\section{System Model}\label{sec:system}

We consider a network model where BSs are spatially distributed following a Poisson Point Process (PPP), denoted as $\Phi=\{\textbf{y}_b\}, b=1, 2, \cdots\in \mathbb{R}^2$ with intensity $\lambda$~ $[\text{BSs}/m^2]$ in the Euclidean plane. Each BS has $N_t$ antennas and serves $M_w$ single-antenna IoT devices or wireless sensors (WSs) within its coverage area as illustrated in Fig. \ref{fig:system_model}. Each WS is associated with its nearest BS, resulting in the division of the geographical area into Voronoi cells. The $w$-th WS connected to the $b$-th BS is denoted by WS$_{w,b}$. The channel vector between the $b'$-th BS and WS$_{w,b}$, represented as $\textbf{h}_{w,b}^{b'}$, follows a Circularly Symmetric Complex Gaussian distribution with zero mean and identity covariance matrix, i.e., $\textbf{h}_{w,b}^{b'}\sim \mathcal{CN}(0,\textbf{I}_{N_t})$. Denote by $\textbf{x}_{w,b}$ the location of WS$_{w,b}$ in $\mathbb{R}^2$. The total channel vector between the $b'$-th BS and WS$_{w,b}$ is given by $ \textbf{h}_{w,b}^{b'}(r_{w,b}^{b'})^{-\alpha/2}$, with $\alpha>2$ the pathloss exponent and $r_{w,b}^{b'}=\lVert \textbf{y}_{b'} -\textbf{x}_{w,b} \rVert$ the euclidean distance separating the BS at $\textbf{y}_{b'}$ from WS$_{w,b}$. Let $a\le1$ denote the RF-DC conversion efficiency characterizing the efficiency of the added wake-up receiver.

Let $\textbf{F}_b=\left[\textbf{f}_{1,b}, \cdots, \textbf{f}_{M_w,b}\right]$ represents the total beamforming matrix of all WSs connected to the $b$-th BS, where $\textbf{f}_{w,b}$ is the beamforming vector used by the BS to generate the precoded WUS intended for wireless sensor  WS$_{w,b}$. In this paper, we employ the ZF beamforming technique\footnote{ZF beamforming is adopted for its effective interference suppression; however, it incurs high computational complexity and channel estimation overhead in large-scale IoT networks. Angle-domain beamforming \cite{Israakhaled2020WSA} offers a lower-complexity alternative with reduced overhead and is left for future work. Moreover, \cite{khaled2021multi} could be used to represent angle-domain inter-device interference.} to effectively eliminate inter-device interference. The resulting beamforming matrix, $\textbf{F}_b$, is given by
\begin{equation}
\textbf{F}_{b}=\frac{{\textbf{H}_b}^H}{{\textbf{H}_b}{\textbf{H}_b}^H},
\end{equation}
where $\textbf{H}_b=\left[\textbf{h}_{1,b}^b, \cdots, \textbf{h}_{M_w,b}^b\right]$ is the channel matrix between the  $b$-th BS and its $M_w$ connected devices. Without loss of generality, we focus on the typical device WS$_{1,o}$ positioned at the origin and connected to BS $o$, i.e., {$r_{1,o}^b=\lVert \textbf{y}_b \rVert$}~\cite{chiu_stochastic_2013}. The total power $ {P}_{r}$ received at the typical device is given in~(\ref{eq:Pr}) at the top of the next page, 
	\begin{table*}[t]
    \normalsize
	\begin{equation}\label{eq:Pr}
       \resizebox{\textwidth}{!}{${P_r}=a p_{1,o} \eta_{1,o} \| \textbf{h}_{1,o}^{o} \textbf{f}_{1,o}\|^2 \| \textbf{y}_{o}\|^{-\alpha} +\\ a \sum_{n=2}^{M_w}p_{n,o}\eta_{n,o}\|\textbf{h}_{1,o}^{o}\textbf{f}_{n,o}\|^2\|\textbf{y}_{o}\|^{-\alpha}+a\sum_{\textbf{y}_b\in \Phi^o}\sum_{m=1}^{M_w}p_{m,b}\eta_{m,b}\|\textbf{h}_{1,o}^b\textbf{f}_{m,b}\|^2\|\textbf{y}_b\|^{-\alpha}.$}
	\end{equation}
	\hrulefill
	\end{table*}
where $\Phi^o=\Phi  \setminus \textbf{y}_{o}$ is the set of interfering BSs, $p_{w,b}$ is the power allocated to the WS$_{w,b}$, and  $\eta_{w,b}$ is the partial beamforming normalization factor at WS$_{w,b}$ given by 
\begin{equation}\label{eq:eta}
\eta_{w,b}=
\frac{1}{ \|\textbf{f}_{w,b}\|^2}.
\end{equation}
The first term in (\ref{eq:Pr}) corresponds to the desired power received at the typical device, while the sum of the second and third terms represents the total interference power denoted as $I$ accounting for intra- and inter-cell interference.
The power allocation (PA) matrix, $\textbf{P}_b=[p_{1,b}, \cdots , p_{M_w,b}]$ of the $M_w$ IoT devices connected to the $b$-th cell, satisfies the total power constraint, i.e., $\|\textbf{P}_b \|=P_t$. We assume a uniform PA among these devices, i.e., $p_{w,b}=\frac{P_t}{M_w} \ (\forall  w \in \{1, \cdots, M_w\}$ and $\textbf{y}_b \in \Phi$), to meet the total transmitted power constraint.  

\begin{figure}[!t]
\centering
\includegraphics[scale=0.3]{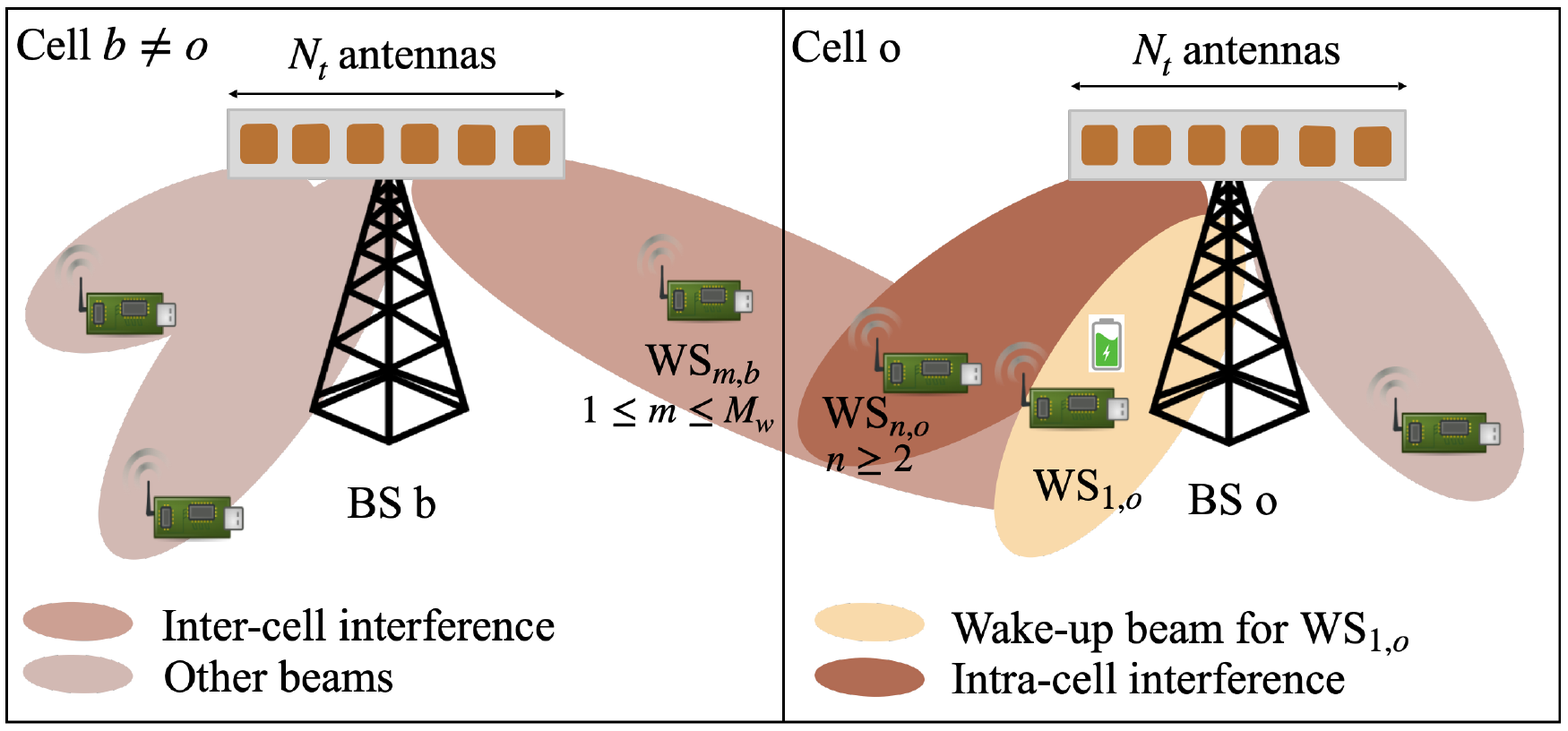}
    \vspace{-1cm}
   \caption{WUR-MIMO system model.}  \label{fig:system_model}
\end{figure}
\section{Wake-up Probability Analysis}\label{sec:derivation}
In this section, we develop an analytical framework to derive the probability of successful wake-up, defined as the event where the BS successfully wakes the typical device via precoded WUS transmission. Due to the complexity of the general case, a tractable expression is obtained for the specific setting $N_t = 2M_w - 1$. To further assess system performance, we also introduce the false wake-up probability, and evaluate both metrics via Monte-Carlo simulations in Section \ref{sec:simu}.

\subsection{Success Wake-up Probability}
A successful wake-up occurs when the BS sends a WUS to a device, and the received signal power $P_r$ exceeds a predefined threshold, $T$.  Assuming that the BS distributes its transmit power among all $M_w$ devices, the success wake-up probability, denoted as $ P_{\text{wu}}^s$ is defined as follows 
\begin{equation}
 P_{\text{wu}}^s \triangleq \mathbb{P}({P_r}\ge T ).
\label{eq:Pwu}
\end{equation}

The success wake-up probability, $P_{\text{wu}}^s$, corresponds to the complementary cumulative distribution function (CCDF) of the received power ${P_r}$. To compute $P_{\text{wu}}^s$, we derive the Laplace transform $\mathcal{L}_{{P_r}}(s)$ of ${P_r}$ and apply its inverse to obtain the CCDF, following the approach in~\cite{kouzayha_joint_2018}. Thus, $P_{\text{wu}}^s$ is given as
\begin{equation}
 P_{\text{wu}}^s =1-\mathcal{L}^{-1}\left\{\frac{1}{s} \mathbb{E}\left[ e^{-s{P_r}}\right ]\right\}.
\label{eq:Pwu}
\end{equation}

The Laplace transform $\mathcal{L}_{{P_r}}(s)$ is derived in Lemma 1 as: 
\begin{lemma}
In an IoT network, where the BSs are distributed according to a PPP with density $\lambda$, and each BS transmits with power \( P_t \) and is equipped with \( N_t = 2M_w - 1 \) antennas to serve \( M_w \) WSs per cell, the Laplace transform of the total received power at the typical device $\mathcal{L}_{{P_r}}(s)$ is given as
\begin{equation}
   \mathcal{L}_{{P_r}}(s)=e^{-\lambda \left[\frac{saP_t }{M_w}\right]^{\frac{2}{\alpha}} \Delta},
\end{equation}
where $\Delta=\frac{2\pi}{\alpha}\sum_{m=1}^{M_w} \binom {M_w}m B'(q,l,0)$, $q=\frac{2}{\alpha}-m+M_w$, and $l=-\frac{2}{\alpha} + m$.
    \begin{IEEEproof} The Laplace transform of the received power given in ~(\ref{eq:Pr}) can be derived as follows
         \begin{equation} \label{eq1}
\begin{aligned}
&\mathcal{L}_{{P_r}}(s)\\
&\overset{(a)}{=}\mathbb{E}_{\Phi}\Biggl\{  \prod_{\textbf{y}_b\in\Phi} \mathbb{E}_{\textbf{h}_{1,{o}}^b} \left[ e^{-s a \sum_{m=1}^{M_w} p_{m,b}\eta_{m,b}\|\textbf{h}_{1,o}^b\textbf{f}_{m,b}\|^2\|\textbf{y}_b\|^{-\alpha}}\right] \Biggl\} \\
& \overset{(b)}{=}\mathbb{E}_{\Phi}\Biggl\{ \mathbb{E}_{\textbf{h}_{1,o}^{o}} \left[ e^{-s a\sum_{n=1}^{M_w} p_{n,o}\eta_{n,o}\|\textbf{h}_{1,o}^{o}\textbf{f}_{n,o}\|^2\|\textbf{y}_{o}\|^{-\alpha}}\right] \\ 
&\times
\prod_{\textbf{y}_b\in\Phi^o } \mathbb{E}_{\textbf{h}_{1,{o}}^b} \left[ e^{-s a \sum_{m=1}^{M_w} p_{m,b}\eta_{m,b} \|\textbf{h}_{1,o}^b\textbf{f}_{m,b}\|^2\|\textbf{y}_b\|^{-\alpha}}\right]\Biggl\} \\
&\resizebox{1\hsize}{!}{$ \overset{(c)}{=}\mathbb{E}_{\Phi}\Biggl\{ \mathbb{E}_{\mu} \left[ e^{-s a \frac{P_t}{M_w}\mu\| \textbf{y}_{o}\|^{-\alpha}}\right]  \prod_{\textbf{y}_b\in\Phi^o} \mathbb{E}_{\nu_b} \left[ e^{-s a \frac{P_t}{M_w} \nu_b\|\textbf{y}_b\|^{-\alpha}}\right] \Biggl\}$},
\end{aligned}
\end{equation} 
where (a) follows from the independent and identically distributed property (i.i.d) of $\textbf{h}_{1,o}^b$ and its further independence from the PPP $\Phi$, (b) follows from the i.i.d property of $\textbf{h}_{1,o}^b$, and (c) follows from substituting the values of $p_{w,b}$ and $\eta_{w,b}$ and defining $\mu \triangleq \sum_{n=1}^{M_w}\frac{\|\textbf{h}_{1,o}^{o}\textbf{f}_{n,o}\|^2}{\|\textbf{f}_{n,o}\|^2}$  and $\nu_b \underset{b\neq o}{\triangleq} \sum_{m=1}^{M_w}\frac{\|\textbf{h}_{1,o}^{b}\textbf{f}_{m,b}\|^2}{\|\textbf{f}_{m,b}\|^2} $. According to~\cite{afify2016unified}, we can approximate $\mu \sim \Gamma(N_t-M_w+1,1)$ and $\nu_b \sim \Gamma (M_w,1)$. Thus, $\mathcal{L}_{{P_r}}(s)$ can be rewritten as follows 
 \begin{equation} \label{eq1}
\begin{split}
&\mathcal{L}_{{P_r}}(s) \overset{(d)}{=}\mathbb{E}_{\Phi}\Biggl\{ \frac{1}{(1+\frac{saP_t}{M_w} \|\textbf{y}_{o}\|^{-\alpha}) ^{N_t-M_w+1}}  \\ 
& \qquad \prod_{\textbf{y}_b\in\Phi^o} \frac{1}{(1+\frac{saP_t}{M_w}\|\textbf{y}_{b}\|^{-\alpha}) ^{M_w}} \Biggl\} \\
& \overset{(e)}{=}\mathbb{E}_{\Phi}\Biggl\{  g(\textbf{y}_o)^{N_t-M_w+1} \prod_{\textbf{y}_b\in\Phi^o} g(\textbf{y}_b) ^{M_w}  \Biggl\}, \\
& \overset{(f)}{=} \exp \left(-2\pi\lambda \int_{0}^{\infty}\left(1-\frac{1}{(1+\frac{saP_t}{M_w}v^{-\alpha})^{M_w}} \right)vdv\right),  \\
&\overset{(g)}{=}  \resizebox{0.9\hsize}{!}{$\exp \left(-2\pi\lambda \sum_{m=1}^{M_w} \binom {M_w}m \frac{1}{{\alpha}}\left[\frac{saP_t}{M_w}\right]^{\frac{2}{\alpha}}\int_{0}^{1}u^{q-1}(1-u)^{l-1}  du\right)$},
\end{split}
\end{equation}
where (d) follows from the Laplace transform of $\mu$ and $\nu_b$ following the Gamma distribution, (e) follows from setting $g(\textbf{y}_b)=\frac{1}{(1+\frac{saP_t}{M_w}\|\textbf{y}_{b}\|^{-\alpha})}$, with $\textbf{y}_b \in \Phi$, (f) follows from assuming $N_t=2M_w-1$\footnote{The constraint $N_t = 2M_w - 1$ is required to ensure closed-form tractability of the PGFL-based derivation. Addressing the integral complexity without this constraint, e.g., via a dominant BS approximation as done in  \cite{kouzayha2017analysis}, is left for future work.} and from applying the PGFL of the PPP. Finally, $q=\frac{2}{\alpha}-m+M_w$ and $l=-\frac{2}{\alpha} + m$ in (g) which is obtained as in~\cite{arsal_coverage_2021} via a change of variables, specifically $u = \frac{1}{1+ \frac{saP_t}{M_w}v^{-\alpha}}$.  We apply the following  manipulations
\begin{equation}\label{}
		\begin{cases}
			v=\left[\frac{usaP_t}{M_w(1-u)}\right]^{\frac{1}{\alpha}}, & \\ 
			dv=\frac{1}{{\alpha}}\left[\frac{saP_t}{M_w}\right]^{\frac{1}{\alpha}}u^{\frac{1}{\alpha}-1}(1-u)^{-\frac{1}{\alpha}-1}du. 
		\end{cases}
	\end{equation}
    \end{IEEEproof}
\end{lemma}
Finally, $P_{\text{wu}}^s$ is given in the following theorem.
\begin{theorem}
In an IoT network, where the BSs are distributed according to a PPP with density \( \lambda \), and each BS transmits with power \( P_t \) and is equipped with \( N_t = 2M_w - 1 \) antennas to serve \( M_w \) WSs per cell, the success wake-up probability \( P_{\text{wu}}^s \) for a generic device, considering an RF-DC conversion factor, \( a \), and a received power threshold, \( T \), is given by
\begin{multline}\label{eq:Pwu_exact}
     P_{\text{wu}}^s = \int_{0}^{\infty}\frac{e^{-xT}}{\pi x}e^{-\lambda  \left(\frac{aP_t}{M_w}\right)^{2/\alpha}\cos(2\pi/\alpha)x^{2/\alpha}\Delta} \\
    \sin\left(\lambda \left(\frac{aP_t}{M_w}\right)^{2/\alpha}\sin(2\pi/\alpha)x^{2/\alpha}\Delta\right) dx.
\end{multline}
\label{theorem:Ps}
\end{theorem}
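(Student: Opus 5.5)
The plan is to invert the Laplace transform of Lemma 1 with the Gil-Pelaez inversion formula, i.e.\ to obtain the CCDF of $P_r$ from its characteristic function instead of doing a Bromwich contour integral. Write $c \triangleq \lambda \left(\frac{aP_t}{M_w}\right)^{2/\alpha}\Delta$, so that Lemma 1 reads $\mathcal{L}_{P_r}(s)=e^{-c s^{2/\alpha}}$. Since $\Delta>0$ is a sum of positive Beta integrals and $0<2/\alpha<1$ because $\alpha>2$, this is the Laplace transform of a one-sided stable law with index $\delta=2/\alpha$. That law has a density, so $\mathbb{P}(P_r=T)=0$ and the CCDF is continuous, which makes the inversion formula valid at every threshold $T$.

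First, I would extend $\mathcal{L}_{P_r}$ analytically to the closed right half-plane. This gives the characteristic function $\varphi(x)=\mathbb{E}[e^{jxP_r}]=\mathcal{L}_{P_r}(-jx)$. For $x>0$ I use the principal branch, $(-jx)^{\delta}=x^{\delta}e^{-j\pi\delta/2}$. Then
\begin{equation*}
\varphi(x)=\exp\!\left(-c\,x^{\delta}\cos(\pi\delta/2)\right)\exp\!\left(j\,c\,x^{\delta}\sin(\pi\delta/2)\right).
\end{equation*}

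Second, I apply Gil-Pelaez:
\begin{equation*}
\mathbb{P}(P_r> T)=\tfrac12+\tfrac1\pi\int_0^\infty \frac{\operatorname{Im}\left[e^{-jxT}\varphi(x)\right]}{x}\,dx.
\end{equation*}
Substituting $\varphi$ turns the integrand into $e^{-cx^{\delta}\cos(\pi\delta/2)}\sin\left(cx^{\delta}\sin(\pi\delta/2)-xT\right)/x$, which uses the angle $\pi/\alpha$.

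The target formula, however, has $e^{-xT}$ as a real damping factor and the angle $2\pi/\alpha$. That form comes from the Bromwich route of (\ref{eq:Pwu}). Write $P^s_{\text{wu}}=1-\frac{1}{2\pi j}\int_{\mathcal{B}} \frac{e^{sT}}{s}e^{-cs^{\delta}}\,ds$. Since the only singularity is the branch cut on the negative real axis, I deform the contour into a Hankel contour wrapped around that cut. On the two edges, $s=xe^{\pm j\pi}$, so $s^{\delta}=x^{\delta}e^{\pm j\pi\delta}$ and $e^{sT}=e^{-xT}$. The pole at $s=0$ contributes residue $1$, which cancels the leading $1$ in (\ref{eq:Pwu}). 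The difference between the two edges then gives
\begin{equation*}
\frac{1}{\pi}\int_0^\infty \frac{e^{-xT}}{x}e^{-cx^{\delta}\cos(\pi\delta)}\sin\left(cx^{\delta}\sin(\pi\delta)\right)dx,
\end{equation*}
and $\pi\delta=2\pi/\alpha$ is exactly the angle in (\ref{eq:Pwu_exact}). So I would follow the Hankel-contour deformation, with Gil-Pelaez only as a cross-check.

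The main obstacle is justifying the contour deformation. Three things need checking:
\begin{itemize}
\item The arcs at infinity vanish. This holds because $\operatorname{Re}(s^{\delta})>0$ in the open half-planes away from the cut when $\delta<1$, and $e^{sT}$ decays on the left. Jordan's lemma then applies for $T>0$.
\item The small circle around the origin contributes exactly the residue $1$, since $e^{-cs^{\delta}}\to1$ as $s\to 0$.
\item The resulting integral converges. Near $0$ the integrand behaves like $\sin(c x^{\delta}\sin\pi\delta)/x\sim x^{\delta-1}$, which is integrable. At infinity $e^{-xT}$ gives decay; note that $\cos(2\pi/\alpha)$ may be negative when $\alpha<4$, so the decay must come from $e^{-xT}$ rather than from the stable exponent.
\end{itemize}
Finally, I would track the branch and sign conventions so that the edges give $+\sin$, and hence a probability in $[0,1]$.
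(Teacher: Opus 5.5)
Your main route is the paper's own appendix argument: you deform the Bromwich line into the keyhole/Hankel contour (the paper's $C_1$ to $C_6$), the small circle supplies the term that cancels the leading $1$, and the two edges $s=xe^{\pm i\pi}$ combine into the sine integral with angle $\pi\cdot\frac{2}{\alpha}=\frac{2\pi}{\alpha}$. Your computation, including signs, is correct, and the Gil-Pelaez detour is only an unnecessary side check. One small correction to your first bullet: for $2<\alpha<4$, $\mathrm{Re}(s^{2/\alpha})$ is negative on the large arcs near the cut, so the arcs do not vanish because $\mathrm{Re}(s^{2/\alpha})>0$. They vanish because in that region $|e^{sT}|\le e^{-\kappa RT}$ with $\kappa=|\cos(\pi\alpha/4)|>0$, which beats the growth $e^{cR^{2/\alpha}}$ (with your $c$) since $2/\alpha<1$, together with a Jordan-type bound near $\theta=\pm\pi/2$.
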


\begin{proof}
See Appendix.
\end{proof}
Note that \eqref{eq:Pwu_exact} cannot be obtained in closed form for a general pathloss exponent; however, it can be efficiently evaluated numerically. A closed-form expression for \eqref{eq:Pwu_exact} is available when \( \alpha = 4 \), as demonstrated in Corollary 1.

\begin{corollary}
For $\alpha=4$, (\ref{eq:Pwu_exact}) can be rewritten as follows
\begin{equation}\label{eq:Pwu_exact_alpha4}
   P_{\text{wu}}^s = 1-2Q\left( \frac{\lambda\Delta}{\sqrt{2}} \sqrt{\frac{aP_t}{M_wT}} \right)
\end{equation}
 where \( Q(z) = \frac{1}{\sqrt{2\pi}} \int_z^\infty e^{-t^2/2} \, dt \), is the Gaussian CCDF.
\end{corollary}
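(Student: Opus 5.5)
We need to prove Corollary 1, the $\alpha=4$ specialization. Write $c \triangleq \lambda \left(\frac{aP_t}{M_w}\right)^{1/2}\Delta$, so that Lemma 1 gives $\mathcal{L}_{P_r}(s)=e^{-c\sqrt{s}}$ when $\alpha=4$.

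The plan is to avoid the oscillatory integral of Theorem~\ref{theorem:Ps} altogether. At $\alpha=4$ the factor $\cos(2\pi/\alpha)=\cos(\pi/2)$ vanishes and $\sin(2\pi/\alpha)=1$, so \eqref{eq:Pwu_exact} reduces to
\begin{equation*}
P_{\text{wu}}^s=\int_0^\infty \frac{e^{-xT}}{\pi x}\sin\left(c\sqrt{x}\right)dx .
\end{equation*}
One can evaluate this directly, but the cleaner route is to recognise $e^{-c\sqrt{s}}$ as the Laplace transform of the L\'evy (one-sided stable with index $1/2$) distribution. The standard pair is
\begin{equation*}
\mathcal{L}^{-1}\left\{e^{-c\sqrt{s}}\right\}(t)=\frac{c}{2\sqrt{\pi}}\,t^{-3/2}e^{-c^2/(4t)} .
\end{equation*}
Its CDF follows from $\mathcal{L}^{-1}\{\frac{1}{s}e^{-c\sqrt{s}}\}(T)=\operatorname{erfc}\left(\frac{c}{2\sqrt{T}}\right)$. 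Substituting into the inversion formula \eqref{eq:Pwu} gives $P_{\text{wu}}^s=1-\operatorname{erfc}\left(\frac{c}{2\sqrt{T}}\right)$.

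To reach the stated form, use $\operatorname{erfc}(z)=2Q(\sqrt{2}\,z)$. With $z=c/(2\sqrt{T})$ this gives $\sqrt{2}\,z=\frac{c}{\sqrt{2T}}=\frac{\lambda\Delta}{\sqrt{2}}\sqrt{\frac{aP_t}{M_wT}}$, which matches \eqref{eq:Pwu_exact_alpha4}.

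For a derivation that stays entirely within the paper's integral, differentiate in $c$:
\begin{equation*}
\partial_c P_{\text{wu}}^s=\frac{1}{\pi}\int_0^\infty \frac{e^{-xT}}{\sqrt{x}}\cos\left(c\sqrt{x}\right)dx .
\end{equation*}
Substitute $x=y^2$ to get $\frac{2}{\pi}\int_0^\infty e^{-Ty^2}\cos(cy)\,dy$. This is the Gaussian Fourier integral $\frac{1}{\sqrt{\pi T}}e^{-c^2/(4T)}$. Integrating from $0$ to $c$ with $P_{\text{wu}}^s|_{c=0}=0$ yields $\operatorname{erf}\left(\frac{c}{2\sqrt{T}}\right)=1-\operatorname{erfc}\left(\frac{c}{2\sqrt{T}}\right)$, the same result.

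The only delicate step is justifying differentiation under the integral sign. This is routine because the factor $e^{-xT}$ gives an integrable dominating function $e^{-xT}/\sqrt{x}$. The remaining obstacle is bookkeeping of constants between $\operatorname{erf}$ and $Q$.
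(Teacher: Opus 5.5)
Your proposal is correct. Both arguments check out: the constants work ($\operatorname{erfc}(z)=2Q(\sqrt{2}\,z)$ with $z=c/(2\sqrt{T})$ gives exactly $\frac{\lambda\Delta}{\sqrt{2}}\sqrt{\frac{aP_t}{M_wT}}$), and the bound $e^{-xT}/\sqrt{x}$ justifies differentiating in $c$.

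The paper gives no written proof of this corollary. It presents the closed form as the $\alpha=4$ specialization of Theorem~\ref{theorem:Ps}, where $\cos(\pi/2)=0$ and $\sin(\pi/2)=1$, and notes agreement with the single-antenna result of prior work.

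Your second argument (differentiate in $c$, use the Gaussian cosine transform, integrate back from $c=0$) carries out that implicit evaluation explicitly. It proves the identity exactly as the corollary states it: the integral \eqref{eq:Pwu_exact} at $\alpha=4$ equals $1-2Q(\cdot)$. It does not depend on how Theorem~\ref{theorem:Ps} was derived.

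Your first argument takes a genuinely different route. You start from Lemma~1's transform $e^{-c\sqrt{s}}$, recognize $P_r$ as a L\'evy (one-sided $1/2$-stable) variable, and read off its CDF. This bypasses the Bromwich-contour computation in the Appendix entirely. It is shorter and more informative: it gives the full law of $P_r$ and explains the closed form at $\alpha=4$, since $2/\alpha=1/2$ is precisely the L\'evy index.

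On its own, though, the first route only shows that $1-2Q(\cdot)$ is the CCDF of $P_r$. Identifying that with the integral in \eqref{eq:Pwu_exact} still requires Theorem~\ref{theorem:Ps} to be correct. It is correct: redoing the Hankel-contour inversion of $e^{-ks^{2/\alpha}}/s$ gives exactly that integral. Presenting both arguments, as you do, closes this loop.
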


It is worth noting that for the particular case, $N_t=M_w=1$, corresponding to the single-antenna scenario, \eqref{eq:Pwu_exact} and \eqref{eq:Pwu_exact_alpha4} match the results from \cite{kouzayha_joint_2018}, validating the single-antenna scenario in our analytical WUR-MIMO framework.

\subsection{False Wake-up Probability}
When the typical IoT device is not intended to wake up, the BS \( o \) updates its PA matrix, \( \textbf{P}_{o} \), by distributing power uniformly among the remaining \( M_w - 1 \) WSs and setting \( p_{1,o} = 0 \). A false alarm event occurs when the typical device is erroneously activated without any WUS transmitted from its serving BS. Thus, the false wake-up probability\footnote{The false wake-up probability is evaluated via Monte-Carlo simulations, and the analytical model remains a topic for future investigation.}, denoted as $P_{\text{wu}}^f$, is defined as the probability that the interference exceeds the power threshold $T$ and can be expressed as  
\begin{equation}\label{eq:falseWU}
 P_{\text{wu}}^f \triangleq \mathbb{P}(I\ge T).
\end{equation}
\section{Numerical Results and Discussion} \label{sec:simu} 
In this section, we validate the derived analytical results for the success wake-up probability by conducting Monte-Carlo simulations of the WUR-MIMO scheme. We further compare it with the single-antenna scenario, highlighting the benefits of integrating WUR with MIMO. Unless otherwise stated, the system parameters are $\lambda = 10 \times (0.5^2 \times \pi)^{-1}$ [BSs/Km$^2$], $P_t = 46$ [dBm], $\alpha=4$, and $a = 1$.

Fig.~\ref{fig:prob_SISO_anal} illustrates the effect of increasing the pathloss exponent $\alpha$ on the wake-up success probability for the case where $N_t=M_w=1$. The results validate the accuracy of the analytical expression derived in Theorem~\ref{theorem:Ps}, as there is a close match between the analytical results and those obtained from Monte-Carlo simulations across various threshold $T$ and pathloss exponent $\alpha$ values. Additionally, Fig. \ref{fig:prob_SISO_anal} demonstrates that the derived success probability expression for 
$N_t=M_w=1$ is consistent with the results reported in~\cite{kouzayha_joint_2018}, which correspond to the single-antenna WUR scenario. This alignment underscores the robustness of the proposed analytical model in capturing the behavior of single-antenna configurations.
\begin{figure}[t!]
    \centering \includegraphics[width=1\linewidth]{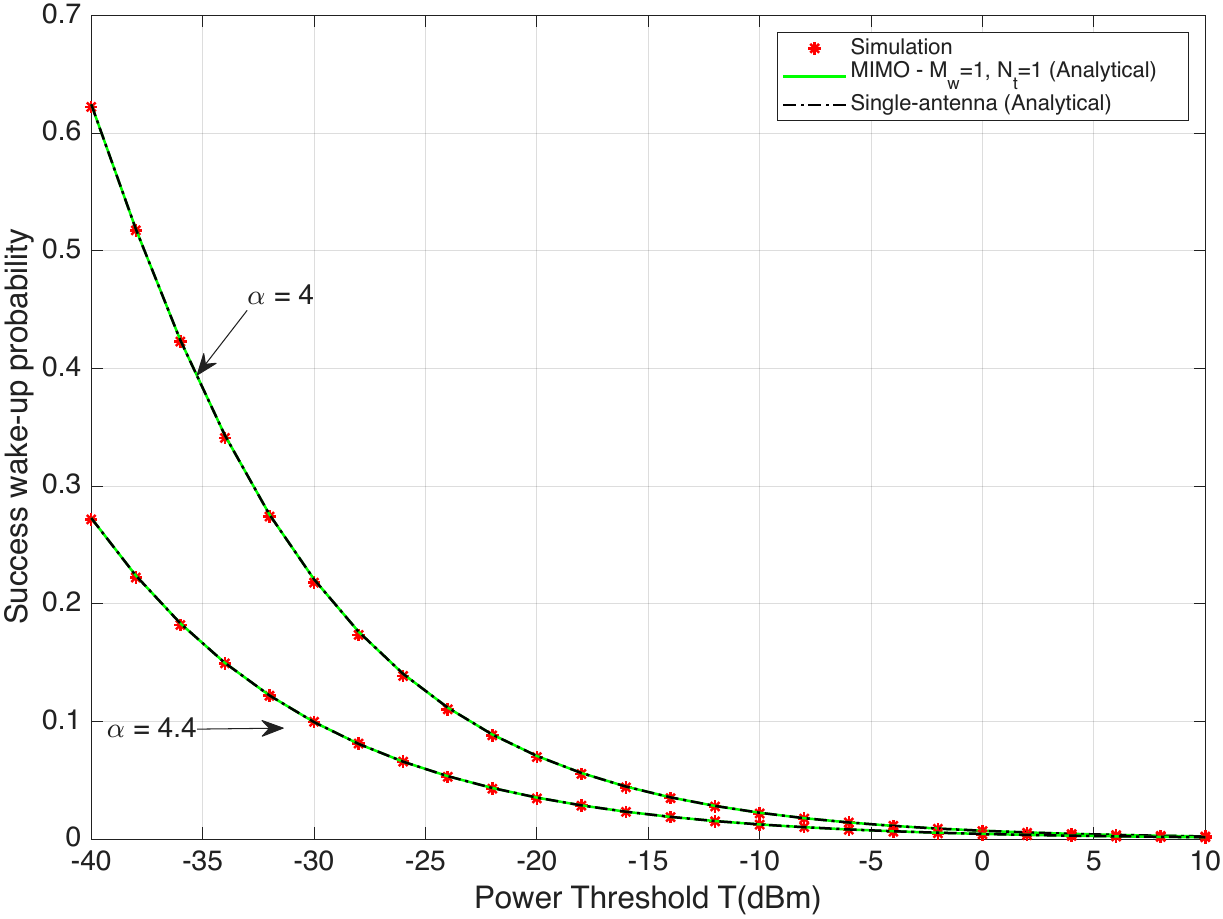}
    \caption{\small Success wake-up probability vs. the power threshold, $T$, for various pathloss exponents, $\alpha$, and single-antenna WUR. }  
    \label{fig:prob_SISO_anal}
\end{figure}

\begin{figure}[!t]
    \centering \includegraphics[width=0.95\linewidth]{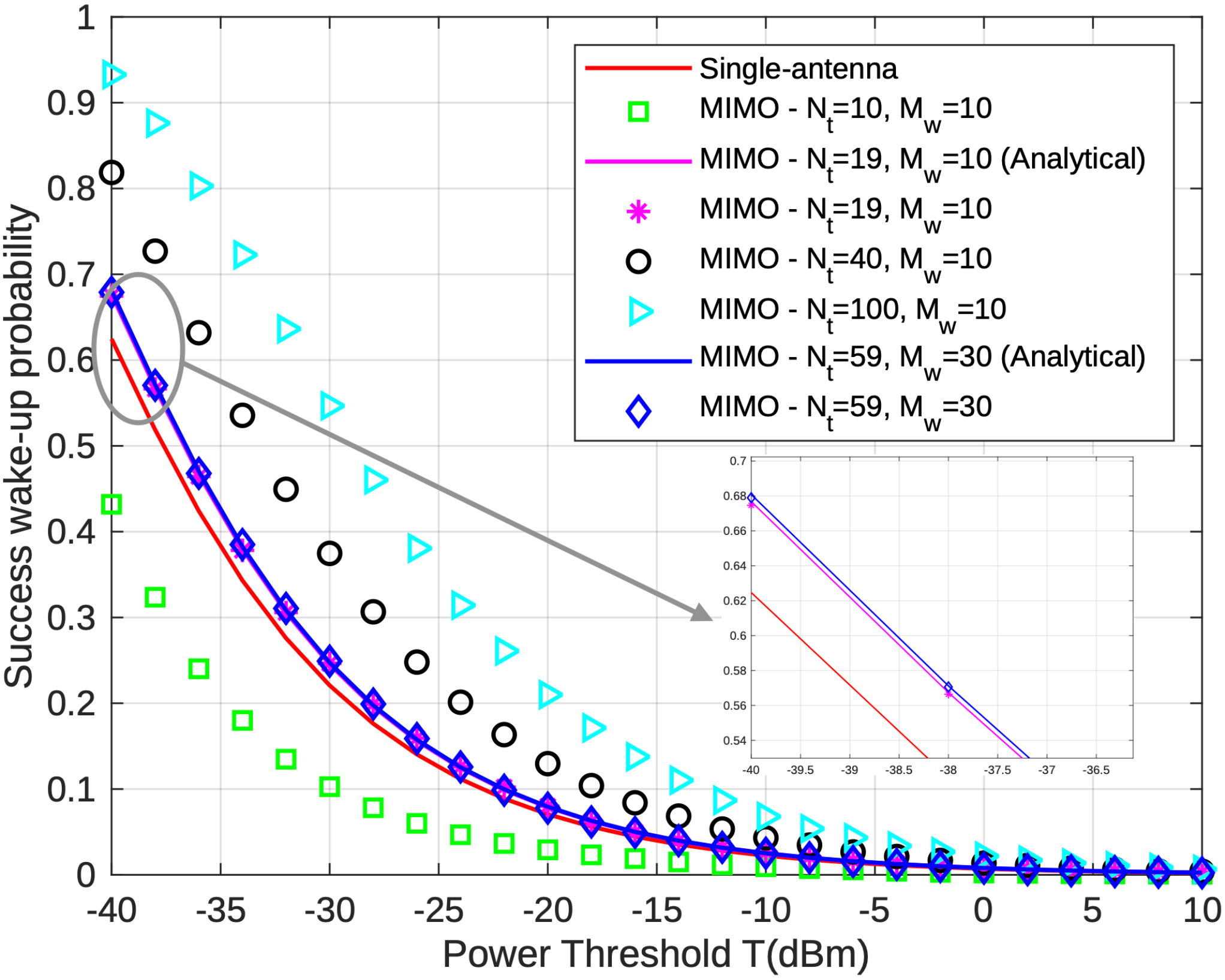}
    \caption{\small Success wake-up probability vs. the power threshold, $T$, for WUR-MIMO solutions. } 
    \label{fig:success_MIMO}
\end{figure}

Fig.~\ref{fig:success_MIMO} depicts the success wake-up probability, \(P_{\text{wu}}^s\) for varying numbers of BS antennas, \(N_t = \{1, 10, 19, 40, 100\}\), serving \(M_w = 10\) IoT devices per cell. Analytical results are provided only for $N_t=19$ and $M_w=10$ consistent with the assumption $N_t = 2M_w - 1$, while the remaining outcomes are simulation-based. The close agreement between analytical and simulation results, as demonstrated in Fig. \ref{fig:prob_SISO_anal}, supports the validity of the derived expression for success wake-up probability. Fig.~\ref{fig:success_MIMO} shows that \(P_{\text{wu}}^s\) generally increases with $N_t$, as more antennas at the BS allow for more focused energy transmission to the desired device, thereby enhancing the success wake-up probability. This improvement is attributed to the enhanced beamforming capabilities of multiple antennas, which concentrate energy more effectively on the desired device, boosting the wake-up success probability. For instance, WUR-MIMO outperforms the single-antenna case by $8\%$, $31\%$, and $49\%$  for \(N_t = 19, 40\) and \(100\) antennas, respectively. However, Fig.~\ref{fig:success_MIMO} also reveals performance degradation under certain conditions. Specifically, when  \(N_t = M_w = 10\), the success probability drops below that of the single-antenna case. This decline occurs because the number of antennas is insufficient to provide the spatial diversity needed for effective separation of IoT devices, underscoring the importance of adequate antenna configurations for optimal wake-up performance. In this context, Fig.~\ref{fig:success_MIMO} also depicts the success wake-up probability, \(P_{\text{wu}}^s\), with $M_w=30$ and $N_t=2M_w-1=59$, evaluated both analytically and through Monte-Carlo simulations. We observe that when the BS uses $N_t=2M_w-1$, \(P_{\text{wu}}^s\) in (\ref{eq:Pwu_exact}), which depends on $M_w$, remains approximately consistent for both scenarios, $M_w=10$ and $M_w=30$. Additionally, MIMO with $N_t=2M_w-1$ achieves a higher success wake-up probability than the single-antenna case, regardless of $M_w$. This behavior highlights the need to configure the number of antennas to exceed $2M_w-1$ for reliable IoT wake-up performance and to maintain the benefits of WUR-MIMO. In other words, a minimum \textit{antenna-to-device ratio} is required to maintain the MIMO advantage over the single-antenna baseline. To improve the trade-off between spectral and energy efficiency at the transmitter, hybrid beamforming with fewer RF chains than antennas is a promising direction. Its combination with low-complexity, low-feedback methods such as those in \cite{khaled2023angle} remains an interesting avenue for future work in integrated wake-up and MIMO systems.


\begin{figure}[!t]
    \centering \includegraphics[width=0.95\linewidth]{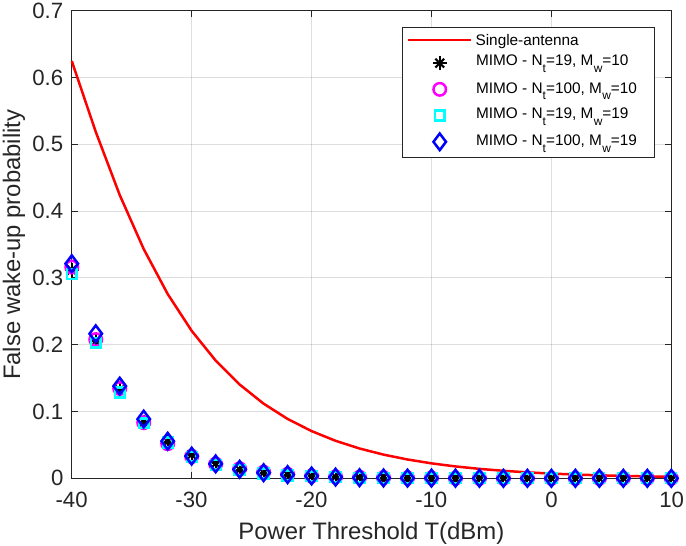}
    \caption{\small False wake-up probability vs. the power threshold, $T$, for single-antenna WUR and WUR-MIMO solutions} 
    \label{fig:false_MIMO}
\end{figure}

Fig.~\ref{fig:false_MIMO} plots the simulation results of the false wake-up probability, $P_{\text{wu}}^f$, in (\ref{eq:falseWU}) as a function of the power threshold for different values of $N_t$ and $M_w$. The results reveal a significant improvement brought by integrated wake-up and MIMO techniques in reducing the false wake-up probability. Compared with the single-antenna case, the false wake-up probability is reduced by $50\%$ (resp. $89\%$) when $T = -40$~dBm (resp. $T = -30$~dBm). The significant reduction in probability of false wake-up may translate into longer lifetime of sensors and reduction of spurious traffic on the wireless medium. Furthermore, the false wake-up probability remains nearly constant regardless of $N_t$ and $M_w$. Although the statistical characteristics of both intra-cell and inter-cell interference follow a gamma distribution with parameters dependent only on $M_w$ under ZF precoding, it was expected that the false probability would depend solely on $M_w$. Similar to the behavior observed in the expression for  $P_{\text{wu}}^s$ in (\ref{eq:Pwu_exact}) with $N_t=2M_w-1$, $P_{\text{wu}}^f$ remains approximately consistent regardless of $M_w$.  This phenomenon is attributed to the functional forms of $P_{\text{wu}}^s$ and $P_{\text{wu}}^f$ which will be further analyzed in future work.

\section{Conclusion}\label{sec:conclu}
In this paper, we investigate the integration of WUR and MIMO in cellular IoT networks. We develop a tractable analytical framework to assess the performance of the WUR-MIMO scheme under the antenna configuration $N_t = 2M_w - 1$, where $M_w$ denotes the number of devices. The analytical results, validated through extensive Monte-Carlo simulations, demonstrate that MIMO beamforming enhances WUR performance, once this antenna configuration is employed. The findings provide practical insights on the minimum antenna-to-device ratio for the deployment and configuration of WUR-MIMO systems. Moreover, the results show that, compared with the single-antenna baseline, MIMO significantly reduces the probability of false wake-ups while increasing the probability of successfully activating the intended devices. Overall, this work advances the understanding of WUR-MIMO as a solution to improve device energy saving and extend the WUS coverage. With WUR now standardized in 3GPP Release 18 (Technical Report 38.869) for wireless networks, a promising future direction is to extend WUR-MIMO to integrated sensing and communication networks, enabling greener and more sustainable wireless systems.

\section*{Acknowledgment}
This work was supported by the Council of the Region Bretagne, under the grant ENERCAP, and was carried out during a postdoctoral fellowship at IMT Atlantique.

\appendix
By performing the inverse Laplace transform of $\mathcal{L}_{{P_r}}(s)$ and defining $ F(s) \triangleq \int_{\gamma-i\infty}^{\gamma+i\infty}\frac{1}{s}e^{-\lambda \left[\frac{aP_t}{M_w}s\right]^{\frac{2}{\alpha}} \Delta}e^{st}ds$, the CCDF of $P_r$, denoted as $P_{\text{wu}}^s$, can be calculated as follows

\begin{equation} \label{eq:P_wu_1}
P_{\text{wu}}^s = 1-\frac{1}{2\pi i}\int_{c-i\infty}^{c+i\infty} F(s) ds,
\end{equation}
where $c$ is a vertical contour in the complex plane chosen to the left of all singularities of $F(s)$. To compute $\int_{c-i\infty}^{c+i\infty} F(s) ds$, we use the complex inversion integral formula for Laplace transforms with $s=0$ as a branch point and the appropriate Bromwich contour avoiding it. Thus, $I_C$ is given by
\begin{equation}
  I_C=  \oint\limits_{C}F(s)\,\mathrm{d}s=\int\limits_{C_1}+\int\limits_{C_2}+\int\limits_{C_3}+\int\limits_{C_4}+\int\limits_{C_5}+\int\limits_{C_6}F(s)ds=0,
\end{equation}
Refer to Fig. \ref{fig:BC} for contour details. $C_1$ is the vertical line segment from $c-iR$ to $c+iR$.  $C_2$ is the circular arc of radius $R$ from the top of $C_1$ to just above the negative real axis. $C_3$ is the horizontal line segment just above the negative real axis from $[-R, -\epsilon]$, with $s=xe^{i\pi}$. $C_4$ is the circular arc of radius $\epsilon$ around the origin, with  $s=\epsilon e^{i\phi}$.  $C_5$  is horizontal line segment just below the negative real axis from  $[-\epsilon, -R]$, with $s=xe^{-i\pi}$.  $C_6$ the circular arc of radius $R$ from just below the negative real axis to the bottom of $C_1$. For $t>0$, the integrals over $C_2$ and $C_6$ vanish as $R \rightarrow \infty$.  
So $I_C$ is given by 
\begin{multline}
    I_C= \int_{c-i\infty}^{c+i\infty}F(s)ds+\int_{\infty}^{\epsilon}e^{-\lambda(\frac{aP_t}{M_w}e^{i\pi}x)^{2/\alpha}\Delta}\frac{e^{-xt}}{x}dx \\ +\int\limits_{\pi}^{-\pi}ie^{-\lambda(\frac{aP_t}{M_w}e^{i\epsilon\phi}x)^{2/\alpha}\Delta(-\epsilon e^{i\phi})}e^{\epsilon e^{i\phi}}d\phi \\+\int_{\epsilon}^{\infty}e^{-\lambda(\frac{aP_t}{M_w}e^{-i\pi}x)^{2/\alpha}\Delta}\frac{e^{-xt}}{x}dx=0,
\end{multline}
Note the apparent singularity at $\epsilon=0$; however, the divergence cancels as $\epsilon \longrightarrow 0$. In this limit, the third integral can be given by $-2i\pi$. To simplify further, we can re-scale and combine the second and fourth integrals. By letting  $e^{i2\pi/\alpha}=\cos(2\pi/\alpha)+i\sin(2\pi/\alpha)$  and  $e^{ix}-e^{-ix}=2i\sin(x)$ , we obtain 
\begin{multline}
    \label{eq:I_cc}
    \resizebox{0.455\textwidth}{!}{$
    \int_{c-i\infty}^{c+i\infty} F(s)\, ds = 2i\pi - \int_{0}^{\infty} \frac{e^{-xt}}{x} e^{-\lambda \left( \frac{aP_t}{M_w} \right)^{2/\alpha} \cos\left(\frac{2\pi}{\alpha}\right) x^{2/\alpha} \Delta} $} \\
    \resizebox{0.35\textwidth}{!}{$
    - 2i \sin\left( \lambda \left( \frac{aP_t}{M_w} \right)^{2/\alpha} \sin\left( \frac{2\pi}{\alpha} \right) x^{2/\alpha} \Delta \right) dx.$}
\end{multline}

By replacing (\ref{eq:I_cc}) in (\ref{eq:P_wu_1}), Theorem~\ref{theorem:Ps} is verified.
\begin{figure}[!t]
\vspace{-0.5cm}
    \centering \includegraphics[scale=0.15]{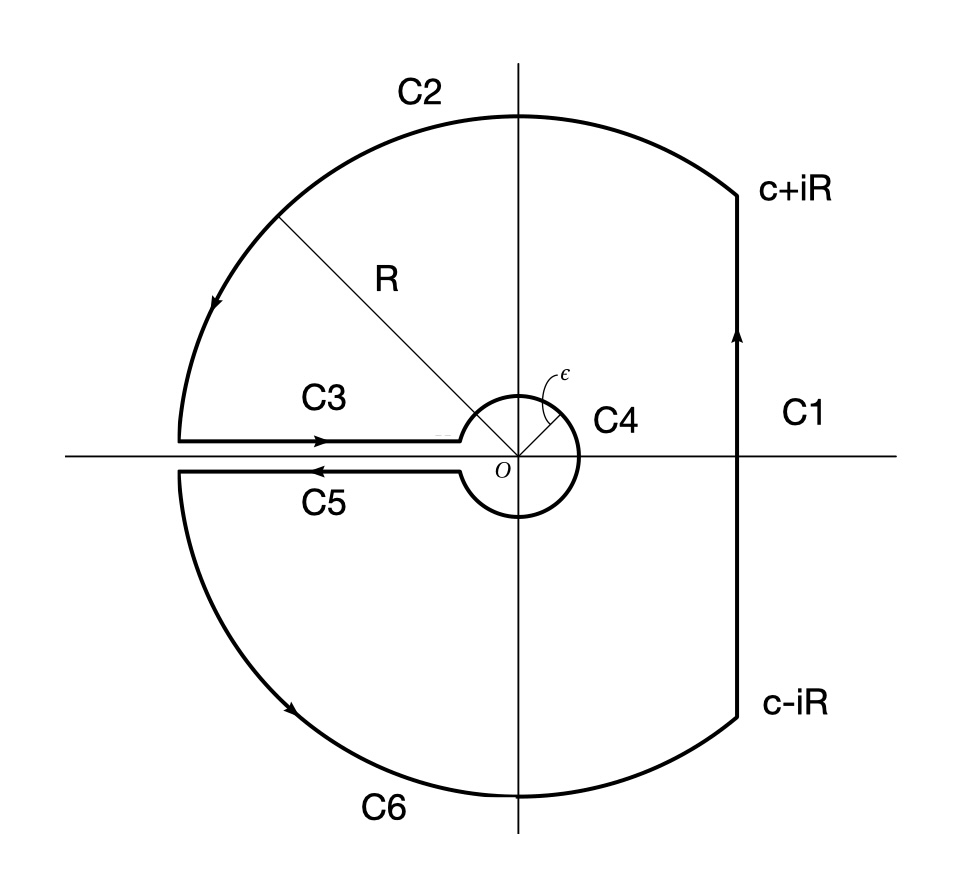}
    \caption{\small Bromwich contour that avoids the branch point $s=0$. }    
    \label{fig:BC}
\end{figure}

\bibliographystyle{IEEEtran}
\bibliography{Bibliography}


\end{document}